\title{Round-Robin Streaming with Generations}
\author{\IEEEauthorblockN
{
Yao Li\IEEEauthorrefmark{1},
P\'{e}ter Vingelmann\IEEEauthorrefmark{2}\IEEEauthorrefmark{4},
Morten Videb{\ae}k Pedersen\IEEEauthorrefmark{4},
Emina Soljanin\IEEEauthorrefmark{3}
}\\
\IEEEauthorblockA
{
	\IEEEauthorrefmark{1}WINLAB, ECE Department, Rutgers University\\
	\IEEEauthorrefmark{2}Budapest University of Technology and Economics, Hungary\\
	\IEEEauthorrefmark{4}Aalborg University, Denmark\\	
	\IEEEauthorrefmark{3}Bell Labs, Alcatel-Lucent
}
}
\begin{document}
\maketitle
\begin{abstract}
We consider three types of application layer coding for streaming over lossy links: random linear coding, systematic
random linear coding, and structured coding. The file being streamed is divided into sub-blocks (generations). Code symbols are formed by combining data belonging to the same generation, and transmitted in a round-robin fashion. We compare the schemes based on delivery packet count, net throughput, and energy consumption for a range of generation sizes. We determine these performance measures both analytically and in an experimental configuration. We find our analytical predictions to match the experimental results. We show that coding at the application layer brings about a significant increase in net data throughput, and thereby reduction in energy consumption due to reduced communication time. On the other hand, on devices with constrained computing resources, heavy coding operations cause packet drops in higher layers and negatively affect the net throughput. We find from our experimental results that low-rate MDS codes are best for small generation sizes,  whereas systematic random linear coding has the best net throughput and lowest energy consumption for larger generation sizes due to its low decoding complexity.
\end{abstract}

\newtheorem{theorem}{Theorem}
\newtheorem{lemma}[theorem]{Lemma}
\newtheorem{claim}[theorem]{Claim}
\newtheorem{corollary}[theorem]{Corollary}
\newtheorem{remark}{Remark}

\section{Introduction}
\label{sec:intro}

With the rapid increase in multicast streaming applications, we see more and more proposals for application-layer rateless erasure coding.
A number of these schemes have already been standardized and are currently being considered for implementation,
such as Raptor codes \cite{RaptorMonograph} for Multimedia Broadcast/Multicast Service (MBMS) \cite{3gppMBMS}. The goal of these schemes is to combat transport-layer packet losses.

In packet-based data networks, large files are usually segmented into smaller blocks that are put into transport packets.
Packet losses occur not only because of the physical channel limitations between the sender and the receiver, but also when the sender pushes data at rates that exceed the speed at which the receiver can take in packets, given its limited processing power and buffer space. In point-to-point scenarios, the sender can adjust its transmission rate to avoid packet losses, and retransmit lost packets according to the feedback from the receiver, to ensure efficient and reliable data delivery. Thus unicast applications usually implement some ARQ protocol. In broadcast applications from a single sender to many receivers, however, it is costly for the sender to collect and respond to individual receiver feedbacks, and thus packet losses are inevitable.

We here consider three rateless coding schemes to combat random packet losses in single-hop scenarios. Two are based on random linear coding,
and the third is based on structured MDS coding such as Reed-Solomon (RS). Note that our scenario of interest is wireless streaming, rather than transmission over networks as in random linear network coding \cite{Ho06arandom} over generations \cite{chou2003pnc}. All schemes follow the round-robin scheduling. Since there is no feedback until the entire file has been downloaded, the round-robin protocol may result in many superfluous
transmissions for already decoded generations. We study the schemes both theoretically and by experiment.

This paper is organized as follows:
In Section~\ref{sec:model}, we introduce our coding and scheduling models and define our performance measures.
In Section~\ref{sec:theory}, we present an analytical analysis of the schemes.
In Section~\ref{sec:results}, we describe the experimental setup, and present measurement results collected on a mobile platform.
In Section~\ref{sec:concl}, we discuss the results and future work. 
\section{System Model}
\label{sec:model}

We consider transmission without feedback over a memoryless binary erasure channel between the sender and the receivers.
In a packet network, the erasure rate is evaluated as the packet loss rate, denoted as $\epsilon.$ For the theoretical analysis,
we assume that $\epsilon$ stays constant, regardless of time and the transmission protocol.

\subsection{Performance Measures}
We will measure the performance of the system by the {\it delivery packet count, delivery time, and energy consumption}.
Delivery packet count is defined as the number of packets that have to be sent until the receiver is able to recover the entire file.
Delivery time is the time the receiver has to spend in the system until it is able to recover the content. It is a random variable that depends on the delivery packet count, the packet size, and the rate of data transmission.
In a wireless network, energy consumption mainly depends on the delivery time and the transmission power, since the power consumption in transmission is dominating.

\subsection{Coding within Generations}
Suppose a file is segmented into $N$ blocks for transmission. A block fits into the data payload of an application layer packet. Throughout the
paper, we use the words ``block'' and ``packet'' interchangeably. To combat random packet losses, we apply erasure codes at the block level.
That is, instead of transmitting the original file blocks, the sender transmits a coded block formed from the original blocks. The coded block contains the same number of information bits contained in an original file block.
 But due to practical concerns, such as computational complexity, when $N$ is large, the erasure codes are
 applied to subsets of the $N$ blocks. These subsets are referred to as generations, and each file block belongs to at least one of the generations. Suppose there are $n$ generations, denoted by $G_1,G_2,\dots,G_n,$ and assume a uniform generation size of $g.$ Note that when $g=1,$ the coded blocks are effectively the original blocks. In each transmission, the sender selects one of the $n$ generations, and sends a coded block composed from the selected generation. A transmission scheme is therefore defined by three aspects: the composition of generations, the encoding scheme of blocks within each generation, and the order of selecting generations whence a coded block is created. The last component is referred to as generation scheduling. When the generation size $g=1,$ it is simply the question of which block to send in each transmission. At the receiver, coded blocks are classified by their originating generations, and decoding is performed within each generation. 

In \cite{generation_ITtrans}, the delivery time of coding within both disjoint and overlapping generations has been studied when generations are scheduled at random and when coded blocks are random linear combinations over a finite field. In this paper we discuss selecting generations in a round-robin fashion: send one coded packet from each generation sequentially and
wrap around. 
As for the encoding scheme within the generations, we study three schemes: (1) the random linear combination approach as in \cite{generation_ITtrans}, (2) the random linear combination approach with a systematic phase, and (3) using an MDS (maximum distance separable) erasure code.

\subsubsection{Random Linear Combinations over GF($q$) (RL)}
Each block is represented as a row vector of symbols from finite field GF($q$), and the whole file is represented as a matrix of $N$ rows, one block each row. We abuse the notation a little here to use $G_j$ to denote the matrix representing generation $G_j$. To generate a coded block from a generation $G_j$ of $g$ blocks, choose a coding vector $c=[c_1,c_2,\dots,c_g]$ by choosing $g$ symbols independently and equiprobably from GF($q$). The resulting coded block is then $c\cdot G_j.$

\subsubsection{Random Linear Combinations Including a Systematic Phase (RLS)}
This is a variation of the RL scheme that includes a systematic phase at the beginning: send the original blocks from A to Z before starting to send
random linear combinations of the original file blocks.

\subsubsection{Maximum Distance Separable Codes (MDS)}
Over a finite field of small size, such as the common binary field, random linear combinations chosen in the way specified in the RL scheme inevitably introduces non-negligible linear dependency between the coded packets. For short lengths of data, we can use low-rate MDS codes instead. With an MDS($K$,$g$) code, $g$ packets are encoded into $K$ coded packets, and all the $g$ packets are recoverable as soon as any $g$ of the $K$ distinct coded packets have been collected. To extend transmission after the sender has exhausted all the $K$ coded packets, the sender repeats the coded packets in a round-robin fashion. The parity check code is a binary MDS code where $K=g+1.$ Reed-Solomon codes are another important class of MDS codes that operate on GF($2^l$) with $g<K<2^l.$ The increased complexity that comes with operations on a finite field of large size, however, can possibly undo the benefit brought by the MDS property, as we will later show in our experimental results.

\section{Statistics of the Delivery Packet Count}
\label{sec:theory}

In this section, we characterize $T$, the delivery packet count of coding within disjoint generations following the round-robin generation scheduling scheme. We assume that in each round, one coded packet is created from a generation that is selected from the $n$ generations in a wrap-around fashion. After the $t$th transmission, $m_t(=\lfloor t/n\rfloor)$ rounds have been completed. By that time, $(m_t+1)$ packets will have been sent from each of the first $[t-m_tn]$ generations, and $m_t$ packets from each of the rest $[(m_t+1)n-t]$ generations.

Since the generations are disjoint, each generation is decoded independently. Let $M_{g,\epsilon}$ be the number of coded packets
needed to be sent over a  link of
packet erasure rate $\epsilon$ from a generation of size $g$ so that the receiver can decode all file packets in the generation.
Let $p_{m,g,\epsilon}$ be the probability that $M_{g,\epsilon}\le m.$ Let $p_t$ be the probability that
$T\le t.$ 
Then, $p_t=p_{m_t+1,g,\epsilon}^{r_t}p_{m_t,g,\epsilon}^{n-r_t},$ where $m_t=\lfloor t/n\rfloor$ and $r_t=t-m_tn.$
Note that since $p_{m,g,\epsilon}$ is the cumulative probability function of $M_{g,\epsilon},$ $p_{m,g,\epsilon}$ is non-decreasing in $m,$ and hence $p_t$ is bounded as follows:
\begin{equation}
p_{m_t,g,\epsilon}^{n}\le p_t < p_{m_t+1,g,\epsilon}^{n}.
\end{equation}
Hence,
\begin{equation}\label{eq:T}E[T]=\sum_{t=0}^{\infty} (1-p_t)=\sum_{m=0}^{\infty}\Bigl(n-p_{m,g,\epsilon}\frac{p_{m+1,g,\epsilon}^{n-1}-p_{m,g,\epsilon}^{n-1}}{p_{m+1,g,\epsilon}-p_{m,g,\epsilon}}\Bigr)
\end{equation}
and
\begin{equation} n\sum_{m=1}^{\infty} (1-(p_{m,g,\epsilon})^{n})<E[T]\le n\sum_{m=0}^{\infty} (1-(p_{m,g,\epsilon})^{n}).\end{equation}


In the following, we characterize $p_{m,g,\epsilon}$ for different coding schemes within each generation.

\subsection{RL Scheme}
In this scheme, each coded packet is statistically the same; it is simply a random linear combination of the source packets.
To decode a generation of size $g$, a number $g$ of linearly independent coded packets must be received. When $m$ coded packets
have been transmitted
over the channel with erasure rate $\epsilon$, some $j\ge g$ have to be received, and among them $g$ have to be linearly independent.
Therefore, the probability $p_{m,g,\epsilon}^\text{RL}$ of successful decoding, given $m\ge g$ coded packets have been transmitted is given as follows:
\begin{claim}
\begin{equation}\label{eq:pm_rl}
p_{m,g,\epsilon}^\text{RL}= \sum_{j=g}^{m}{m\choose j}(1-\epsilon)^j\epsilon^{m-j}
\prod_{s=0}^{g-1}(1-q^{s-j})
\end{equation}
\end{claim}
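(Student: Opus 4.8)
The plan is to decompose the event by the number $j$ of coded packets that actually arrive, and then compute the conditional probability that those $j$ received packets suffice to decode. Because each of the $m$ transmitted packets is erased independently with probability $\epsilon$, the number received is $\mathrm{Binomial}(m,1-\epsilon)$, so the probability that exactly $j$ arrive is $\binom{m}{j}(1-\epsilon)^j\epsilon^{m-j}$. A generation of size $g$ is decodable from the received packets if and only if their coding vectors span $\mathrm{GF}(q)^g$, equivalently if and only if the $j\times g$ matrix whose rows are these vectors has rank $g$. Since the rank of a $j\times g$ matrix is at most $\min(j,g)$, this forces $j\ge g$, which is why the sum starts at $j=g$. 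By the law of total probability I would write $p_{m,g,\epsilon}^{\text{RL}}=\sum_{j=g}^m\binom{m}{j}(1-\epsilon)^j\epsilon^{m-j}\,\rho_{j,g}$, where $\rho_{j,g}$ is the conditional probability of rank $g$ given $j$ received packets.

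It then remains to identify $\rho_{j,g}$ with $\prod_{s=0}^{g-1}(1-q^{s-j})$. The key observation is that the coding coefficients are drawn independently of the erasure pattern, so conditioned on exactly $j$ arrivals the received coding vectors are still $j$ independent, uniformly distributed vectors in $\mathrm{GF}(q)^g$. Equivalently, the $j\times g$ matrix $M$ formed from them is uniform over all of $\mathrm{GF}(q)^{j\times g}$, since all $jg$ of its entries are i.i.d.\ uniform over $\mathrm{GF}(q)$. For $j\ge g$, $M$ has rank $g$ precisely when its $g$ columns, viewed as vectors in $\mathrm{GF}(q)^j$, are linearly independent.

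I would compute $\rho_{j,g}$ by counting ordered $g$-tuples of linearly independent vectors in $\mathrm{GF}(q)^j$ directly: the $s$th column (for $s=1,\dots,g$) must avoid the span of the preceding $s-1$ columns, a subspace of $q^{s-1}$ vectors, leaving $q^j-q^{s-1}$ admissible choices. Dividing the number of such tuples by the total count $q^{jg}=(q^j)^g$ of all matrices gives
\begin{equation*}
\rho_{j,g}=\prod_{s=1}^{g}\frac{q^j-q^{s-1}}{q^j}=\prod_{s=1}^{g}\bigl(1-q^{s-1-j}\bigr)=\prod_{s=0}^{g-1}\bigl(1-q^{s-j}\bigr),
\end{equation*}
after re-indexing. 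Substituting this into the total-probability sum yields the claimed expression \eqref{eq:pm_rl}.

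This argument is mostly bookkeeping rather than a genuinely hard step; the one point that deserves care is the independence between the random coding coefficients and the channel erasures, which is exactly what lets me treat the received matrix as uniform after conditioning on $j$. The only other thing to verify explicitly is the equivalence ``decodable $\iff$ rank $g$,'' namely that recovering all $g$ source packets requires $g$ linearly independent received combinations and that nothing weaker suffices.
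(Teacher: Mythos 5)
Your proposal is correct and matches the paper's own (informal) justification: the paper likewise decomposes over the number $j$ of received packets via the binomial term and identifies the product $\prod_{s=0}^{g-1}(1-q^{s-j})$ as the probability that a uniformly random $j\times g$ matrix over GF($q$) has full column rank. The only difference is that you actually derive that product by counting ordered tuples of independent columns, whereas the paper asserts it without derivation.
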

The product in the equation is the probability that a $j\times g$ matrix with random entries chosen independently and
equiprobably from GF($q$) is of full column rank $g$. It is equal to the probability of having $g$ linearly independent
coded packets among $j$ coded packets. We can lower bound this product as follows (see \cite{Brent:2003}, Lemma~7):
\begin{equation}
\prod_{s=0}^{g-1}(1-q^{s-j}) \ge
\left\{
  \begin{array}{ll}
    0.288, & \hbox{if $q=2$ and $g=j$;} \\
    1-\frac{1}{q^{j-g}(q-1)}, & \hbox{otherwise.}
  \end{array}
\right.\label{eq:bound_qprod}
\end{equation}

When $q$ is large, we can further approximate $p_{m,g,\epsilon}^\text{RL}$ as follows:
\begin{align*}
p_{m,}^\text{RL}&{}_{g,\epsilon}
\gtrsim\sum_{j=g}^{m}{m\choose j}(1-\epsilon)^j\epsilon^{m-j}\\
&-\frac{1}{q-1}\sum_{j=g}^{m}{m\choose j}(1-\epsilon)^j(q\epsilon)^{m-j} q^{g-m}
\end{align*}
\subsection{RLS Scheme}
This scheme consists of two phases. In the first (systematic) phase, only uncoded packets are sent, and the second phase
is the same as the RL scheme described above. For each generation, first each of the original $g$ packets is transmitted once,
and random linear combinations of all the packets afterwards. Therefore, after $m$ transmissions, a generation can be decoded
if $l$ packets are received during the first phase of $g$ transmissions, and $g-l$ coded packets that are linearly independent
of the first $l$ are received in the second phase of $m-g$ transmissions. The probability $p_{m,g,\epsilon}^\text{RLS}$ of successful
decoding, given $m\ge g$ coded packets have been transmitted over the channel with the erasure rate $\epsilon$ is given as follows:

The probability of receiving $h$ linearly independent packets from $m$ transmissions is
\begin{claim}\label{thm:rls_pm}
\begin{align}\label{eq:pm_rls}
p_{m,g,\epsilon}^\text{RLS}=&(1-\epsilon)^g+\sum_{l=0}^{g-1} {g\choose l}(1-\epsilon)^l\epsilon^{g-l}p^{\text{RL}}_{m-g,g-l,\epsilon}\\
\gtrsim&\sum_{j=g}^m{m\choose j}(1-\epsilon)^j\epsilon^{m-j}+\frac{(1-\epsilon)^g}{q-1}(\frac{1-\epsilon}{q}+\epsilon)^{m-g}-\notag\\
&-\frac{1}{q-1}\sum_{j=g}^{m}{m\choose j}(1-\epsilon)^{j}\epsilon^{m-j}q^{g-j}\notag
\end{align}
\end{claim}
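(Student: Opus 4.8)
The plan is to first prove the exact identity on the first line of~\eqref{eq:pm_rls} by conditioning on the outcome of the systematic phase, and then to obtain the large-$q$ lower bound by inserting the RL approximation and collapsing the resulting double sums with Vandermonde's identity and the binomial theorem. For the exact identity, let $L$ be the number of the $g$ uncoded packets that survive the first phase. Independent erasures give $L\sim\mathrm{Bin}(g,1-\epsilon)$, so $\Pr[L=l]=\binom{g}{l}(1-\epsilon)^l\epsilon^{g-l}$. When $L=g$ the generation is already decoded, which accounts for the term $(1-\epsilon)^g$. When $L=l<g$, the received uncoded packets supply $l$ distinct standard-basis coding vectors; projecting every later random coding vector onto the complementary $(g-l)$ coordinates yields a uniform vector on $\mathrm{GF}(q)^{g-l}$, so completing the basis is equivalent to RL decoding of a size-$(g-l)$ generation in $m-g$ transmissions, with probability $p^{\text{RL}}_{m-g,g-l,\epsilon}$. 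Total probability over $l$ then yields the first line of~\eqref{eq:pm_rls}.

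Next I would substitute the RL large-$q$ approximation term by term and treat its leading and correction parts separately. The leading part replaces each $p^{\text{RL}}_{m-g,g-l,\epsilon}$ by $\Pr[R_2\ge g-l]$, where $R_2\sim\mathrm{Bin}(m-g,1-\epsilon)$ counts the received second-phase packets. The main term is then $\sum_{l=0}^{g}\Pr[L=l]\,\Pr[R_2\ge g-l]$, the $l=g$ summand being $(1-\epsilon)^g$. Since $L$ and $R_2$ are independent and $L+R_2\sim\mathrm{Bin}(m,1-\epsilon)$, conditioning on $L$ identifies this with $\Pr[L+R_2\ge g]=\sum_{j=g}^{m}\binom{m}{j}(1-\epsilon)^j\epsilon^{m-j}$, the first term of the bound.

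The correction part is the delicate step. After inserting the RL error term and using $(q\epsilon)^{(m-g)-j'}q^{(g-l)-(m-g)}=\epsilon^{(m-g)-j'}q^{(g-l)-j'}$, I would reindex with $k=g-l$ and then $J=j'+(g-k)$, which makes every summand carry the common factor $(1-\epsilon)^J\epsilon^{m-J}q^{g-J}$ multiplied by $\binom{g}{k}\binom{m-g}{J-g+k}$. Swapping the order of summation, the inner sum over $k\ge 1$ is $\sum_{k\ge 1}\binom{g}{k}\binom{m-g}{J-g+k}=\binom{m}{J}-\binom{m-g}{J-g}$ by Vandermonde's identity, where the excluded $k=0$ term is exactly the already-decoded $l=g$ case that carries no approximation error. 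The $\binom{m}{J}$ contribution reproduces $-\frac{1}{q-1}\sum_{j=g}^{m}\binom{m}{j}(1-\epsilon)^j\epsilon^{m-j}q^{g-j}$, and the $\binom{m-g}{J-g}$ contribution, after setting $J=g+i$, collapses by the binomial theorem to $\frac{(1-\epsilon)^g}{q-1}\bigl(\frac{1-\epsilon}{q}+\epsilon\bigr)^{m-g}$, giving the two remaining terms.

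The main obstacle is precisely this error-term bookkeeping: choosing the index shifts so that the exponents of $(1-\epsilon)$, $\epsilon$, and $q$ become independent of the inner summation variable, and correctly isolating the single $k=0$ term so that Vandermonde's identity applies cleanly. The main-term collapse and the exact identity are comparatively routine.
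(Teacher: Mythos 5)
Your proposal is correct and follows essentially the same route as the paper's appendix proof: expand $p^{\text{RL}}_{m-g,g-l,\epsilon}$, lower-bound the $q$-product by $1-\tfrac{1}{q-1}q^{g-j}$, reindex so the exponents of $(1-\epsilon)$, $\epsilon$, and $q$ depend only on the outer index, collapse with Vandermonde's identity (isolating the $l=g$ term), and finish with the binomial theorem. The only differences are cosmetic --- you handle the main term probabilistically via $L+R_2\sim\mathrm{Bin}(m,1-\epsilon)$ and give an explicit projection argument for the exact identity, both of which the paper does algebraically or merely asserts.
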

\begin{proof}
Please refer to the appendix.
\end{proof}

\subsection{MDS Scheme}
Suppose we use an MDS code which encodes $g$ symbols into $K$ symbols
s.t.\ the $g$ symbols can be entirely recovered as long as $g$ distinct symbols have been received.
We apply the code to generate $K$ encoded packets from $g$ original packets, and transmit the $K$ encoded packets in a round-robin fashion.

Let $u_m=\lfloor\frac{m}{K}\rfloor$ and $v_m=m-u_mK$ be the quotient and the remainder of the number of transmissions $m$ divided by the code block length $K$. Then, after $m$ transmissions, the first $v_m$ of the $K$ encoded packets have been transmitted $u_m+1$ times and the last $K-v_m$ of the $K$ encoded packets have been transmitted $u_m$ times. The probability that an encoded packet has been received is then $1-\epsilon^{u_m+1}$ for any packet among the first $v_m,$ and $1-\epsilon^{u_m}$ among the last $K-v_m.$
The probability $p^{\mathrm{MDS}}_{m,K,g,\epsilon}$ of successful decoding of all the $g$ packets, given $m$ encoded packets have been transmitted, is equal to the probability that at least $g$ of the $K$ encoded packets have been received, or at most $K-g$ encoded packets have never been received. Therefore, $p^{\mathrm{MDS}}_{m,K,g,\epsilon}$ can be computed by summing up the probability that $l$ of the first $v_m$ packets are absent and $j$ of the remaining $K-v_m$ packets are absent in the receiver collection for all integers $l$ and $j$ satisfying $0\le l+j\le K-g.$
\begin{claim}
\begin{align}\label{prob_MDS}
p^{\mathrm{MDS}}_{m,K,g,\epsilon}=&\sum_{l=0}^{K-g}{v_m\choose l}(\epsilon^{u_m+1})^l(1-\epsilon^{u_m+1})^{v_m-l}\\
&\cdot\sum_{j=0}^{K-g-l}{{K-v_m}\choose j}(\epsilon^{u_m})^j(1-\epsilon^{u_m})^{K-v_m-j}. \notag
\end{align}
where $u_m=\lfloor\frac{m}{K}\rfloor,$ $v_m=m-u_mK,$ and ${a\choose b}=0$ for $b>a.$
\end{claim}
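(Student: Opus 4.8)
The plan is to translate the event ``all $g$ source packets are decodable'' into a statement purely about which of the $K$ distinct coded packets are ever received, and then to count the favorable configurations combinatorially. First I would fix the reception model: because the channel is memoryless and each transmission is independently erased with probability $\epsilon$, a coded packet placed on the link $t$ times fails to reach the receiver (call it \emph{absent}) with probability exactly $\epsilon^t$, and these absence events are mutually independent across distinct coded packets. From the round-robin bookkeeping already established in the text, the first $v_m$ coded packets are transmitted $u_m+1$ times and the remaining $K-v_m$ are transmitted $u_m$ times; hence each packet in the first group is absent with probability $\epsilon^{u_m+1}$ and each packet in the second group with probability $\epsilon^{u_m}$.

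Next I would invoke the defining property of the MDS$(K,g)$ code: the $g$ source packets are recoverable as soon as any $g$ of the $K$ distinct coded packets have been received. Therefore decoding succeeds if and only if at least $g$ distinct coded packets are received, equivalently at most $K-g$ of them are absent. This reduction is the crux of the argument, since it lets me discard all information about the order and multiplicity of receptions and work only with the per-packet absence indicators.

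I would then split the absent packets by group. Let $l$ be the number absent among the first $v_m$ packets and $j$ the number absent among the last $K-v_m$ packets. Because the two groups carry distinct but constant absence probabilities and all indicators are independent, $l$ and $j$ are independent binomial variables, $l\sim\mathrm{Bin}(v_m,\epsilon^{u_m+1})$ and $j\sim\mathrm{Bin}(K-v_m,\epsilon^{u_m})$. The success event $\{l+j\le K-g\}$ then has probability equal to the double sum over all pairs $(l,j)$ with $0\le l\le K-g$ and $0\le j\le K-g-l$, each term being the product of the corresponding binomial point masses. This is precisely \eqref{prob_MDS}, and the convention ${a\choose b}=0$ for $b>a$ silently discards the terms where a group is smaller than its summation index.

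There is essentially no hard analytic step here; the effort lies entirely in setting up the model correctly. The one point deserving care is justifying the independence of the per-packet absence events and verifying that the stated summation ranges make $l+j$ sweep exactly over $\{0,1,\dots,K-g\}$ with no double counting. I would flag the convention ${a\choose b}=0$ for $b>a$ as the only genuine bookkeeping subtlety, since it is what keeps the fixed outer limit $K-g$ valid even when $v_m$ or $K-v_m$ is small.
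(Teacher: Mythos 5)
Your proposal is correct and follows essentially the same route as the paper: identify the per-packet absence probabilities ($\epsilon^{u_m+1}$ for the first $v_m$ packets, $\epsilon^{u_m}$ for the rest), reduce decodability to the event that at most $K-g$ distinct coded packets are absent via the MDS property, and sum the product of the two independent binomial point masses over $0\le l+j\le K-g$. No gaps; your remark on the ${a\choose b}=0$ convention matches the paper's own caveat.
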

When $m\le K,$ $u_m=0,$ $v_m=m,$ (\ref{prob_MDS}) becomes \[p^{\mathrm{MDS}}_{m,K,g,\epsilon}=\sum_{l=0}^{m-g}{m\choose l}\epsilon^l(1-\epsilon)^{m-l}=\sum_{j=g}^{m}{m\choose j}(1-\epsilon)^{j}\epsilon^{m-j}.\]

When $K=g,$ the code is the repetition code, and the right hand side of (\ref{prob_MDS}) becomes $(1-\epsilon^{u_m+1})^{v_m}(1-\epsilon^{u_m})^{K-v_m}.$

\section{Numerical and Experimental Results}
\label{sec:results}

To evaluate the performance of the schemes discussed in the previous section, we implemented them on an experimental platform consisting of a laptop computer and a smartphone. We measure the time and the energy consumption required for the receiver to recover the whole file. In this section, the experimental results are presented along with the theoretical predictions. 
\subsection{Experimental Setup}
The experimental setup consists of an HP Pavilion dv5-1120eg laptop computer as a transmitter and a Nokia N8 smartphone as a receiver. The specifications for the Nokia N8 are shown in Table~\ref{tab:specsn8}.
\begin{table}[hbt]
\caption{Specifications of the Nokia N8}
\label{tab:specsn8}
	\centering
		\begin{tabular} {|l|l|} \hline
			\bfseries Operating System & Symbian\textasciicircum 3 \\ \hline
			\bfseries CPU	& ARM11 @ 1 GHz \\ \hline
			\bfseries Memory & 256 MB SDRAM \\ \hline
			\bfseries Display	& 640 x 360 pixels, 3.5 inch \\ \hline
			\bfseries Battery & BL-4D (3.7 V, 1200 mAh Li-Ion) \\ \hline
		\end{tabular}
\end{table}

Both the laptop and the smartphone runs the same native C++ application (in sender and receiver mode, respectively) implemented using the Qt cross-platform application framework.
The laptop transmits a file at a nominal application-layer data rate of 1000KB/s via UDP and using IEEE 802.11b at a physical layer rate of 11~Mbps. A transmitted file consists of 512 random packets having 1400 data bytes each. These data packets are encoded following the three encoding schemes described in Section \ref{sec:model}.
The receiving cell phone tries to decode the original file without sending any feedback information to the sender except for a final completion indicator transmitted only when the file is fully decoded. The sender stops transmission once it has received this completion signal.


During the measurements the following information is recorded:
\begin{enumerate}
\item Number of packets sent before receiving the completion signal.
\item Number of packets received before sending the completion signal.
\item Time elapsed from the time when the first packet is received to the completion time.
\item Energy consumption by the receiver during the elapsed time. The test application uses the Control API of the Nokia Energy Profiler \cite{epnokia} to programmatically monitor (and record) the energy consumption of the mobile phone. The margin of error for these energy readings is 3\%.
\end{enumerate}

Each test was repeated 100 times for each generation size and encoding scheme pair. The following section presents the experimental results observed.

\subsection{Results}

\begin{figure}
\centering
	\includegraphics[width=\columnwidth]{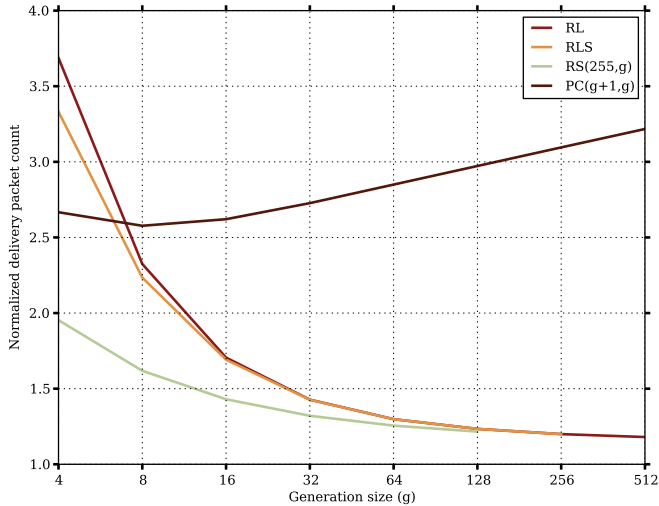}
\caption{Predicted expected delivery packet count (number of transmitted packets required to recover the entire file) versus generation size (assuming packet loss rate $\epsilon = 0.15$). RL: Random linear combinations. RLS: Random linear combinations with a systematic phase. RS(255,g): Reed-Solomon codes. PC($g+1$,$g$): A systematic code with a single coded packet as the bit-by-bit xor-sum of all file packets.}
\label{fig:dcount015}
\end{figure}

Figure~\ref{fig:dcount015} shows theoretical predictions for the normalized delivery packet count (i.e. how many packets are needed to successfully deliver one packet) under typical channel conditions in our experimental setup. The predictions are calculated from \eqref{eq:T} where $p_{m,g,\epsilon}$ is obtained from \eqref{eq:pm_rl}, \eqref{eq:pm_rls}, or \eqref{prob_MDS}. We observed that the packet loss rate ($\epsilon$) is around 15\% on an idle receiver when the sender is transmitting at a nominal rate of 1000KB/s. The RL and RLS schemes encode over the binary field, and the MDS schemes are represented by a Reed-Solomon (RS) code ($n=255, K=g$) and a simple Parity Check (PC) code ($n=g+1, K=g$) that has one parity symbol (all original symbols XORed together).
We observe that the overhead per packet drops as the generation size increases, and thus the probability of transmitting a packet for an already decoded generation decreases. This is not true for PC($g+1$,$g$) that can only cope with very low packet loss rates. The incorporation of a systematic phase in the random linear combination approach helps to reduce overhead for small generation sizes, but the gap quickly closes as the generation size increases. The Reed-Solomon code curve is near optimal since the code rates we use are much lower ($R<0.51$) than the packet loss rate, and with a high probability the transmission finishes before the sender runs out of the $255$ coded packets for each generation.

\begin{figure}
\centering
	\includegraphics[width=\columnwidth]{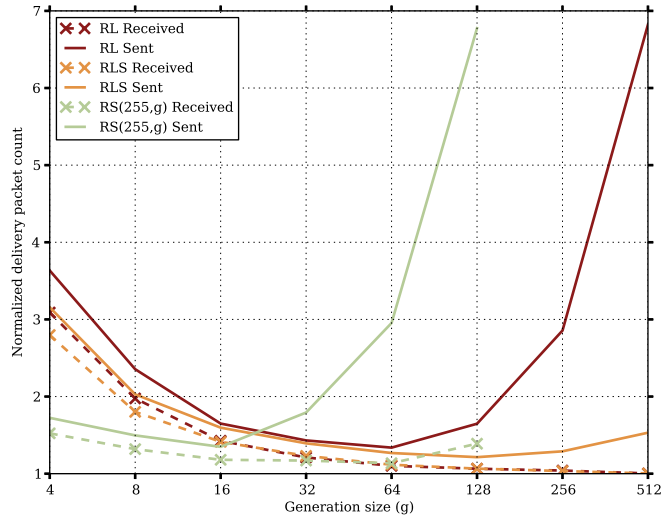}
\caption{Measured delivery packet count versus generation size}
\label{fig:exp1}
\end{figure}

Figure~\ref{fig:exp1} shows the average number of packets sent per successfully delivered packet as measured in our experiments. This was calculated using the total number of packets sent and received divided by the number of packets in the test file (i.e. $512$). For small generation sizes, we observe that increasing the generation size lowers the overhead per packet. These values are in accordance with the predictions in Figure~\ref{fig:dcount015}.
We would expect this trend to continue, since ideally we would use a single generation for the entire file. This would eliminate the possibility of transmitting packets that belong to an already decoded generation. However, this is not the optimal strategy in practice due to the increasing computational complexity. Figure~\ref{fig:exp1} shows that the overhead per sent packet increases significantly for the RS(255,$g$) scheme when $g > 16$, and for the RL scheme when $g > 64$. This indicates that the computational load on the receivers was too high, and they were unable to keep up with the transmission rate of the sender. The offset between the RS(255,$g$) and RL scheme may be explained by the larger field size used by the RS(255,$g$) scheme. Utilizing large fields (e.g. $q=2^8$) typically requires some form of memory based look-up table to perform multiplication and division, whereas all operations in the binary field ($q=2$) may be implemented using CPU instructions for binary XOR and AND operations. The RS implementation was based on a non-systematic Vandermonde matrix, other approaches such as utilizing binary Cauchy matrices~\cite{p:08:jer} should be considered to further increase the performance of this implementation.

The lower computational requirements associated with the systematic packets in the RLS scheme clearly benefit the overall system performance. It is however worth noting that the systematic phase assumes that the receivers did not  receive any packets previously. The systematic phase might lead to an additional overhead if the state of the receivers is initially unknown.
The curve of the RLS scheme only deviates from the predicted values for very high generation sizes, $256$ and $512$.

Figure~\ref{fig:exp2} shows that when we plug the average (application-layer) packet loss rate observed from the experiments (the loss rates are higher for larger generation sizes) into Claims 1-3, the theoretical predictions still match experimental data. This confirms the validity of our characterization.

\begin{figure}
\centering
	\includegraphics[width=\columnwidth]{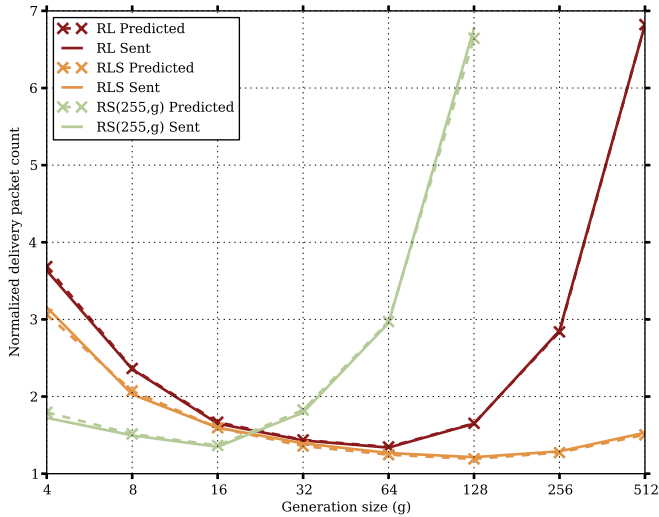}
\caption{Measured delivery packet count compared to predictions calculated for measured packet loss rates}
\label{fig:exp2}
\end{figure}

\begin{figure}
\centering
	\includegraphics[width=\columnwidth]{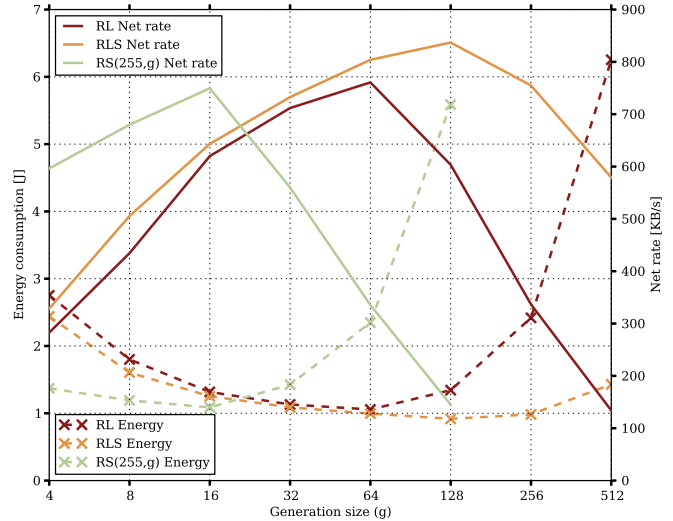}
\caption{Net data rate ($=\frac{\textnormal{file size}}{\textnormal{delivery time}}$) and energy consumption versus generation size}
\label{fig:exp3}
\end{figure}

The energy consumption of the communication system is especially important on battery-driven mobile devices. In Figure~\ref{fig:exp3}, we show the average energy consumption in Joules per file download. This is compared to the net throughput observed throughout the test. Due to the dominant impact of the wireless radio on the power consumption, we observe a significant connection between these two measured quantities. As the power consumption of the wireless radio remains relatively stable, when not in power-save or sleep mode, the energy consumption largely depends on the time needed to complete a test, and thus it is inversely proportional to the net rate. In order to minimize the energy consumption of the protocol, we need to maximize the net rate.

When comparing the RS(255,$g$) to the binary-field RL and RLS codes, we observe that using a larger field size yields a better code performance at lower generation sizes. On the other hand, it is unable to sustain the low overhead as the generation size and thereby the computational complexity increases.

Although these results and the specific optimal values are certainly device- and system-dependent, we expect that  other devices would exhibit similar tendencies, but the actual values would be shifted depending on the capabilities of the given platform. Faster devices might be able to support higher generation sizes and higher data rates.



\section{Conclusion and Future Work}
\label{sec:concl}

In this paper, we considered three application-layer coding schemes for streaming over lossy links: random linear coding (RL), systematic random linear coding (RLS), and structured coding (MDS). We characterized the exact distribution and the expected value of the delivery packet count of coding within disjoint generations following the round-robin generation scheduling scheme, taking into account the effect of field size and generation size. Our characterization matches experimental results.

The three coding schemes were implemented on a laptop computer and a Nokia N8 smartphone using the Qt cross-platform application framework. We presented measurement results collected during numerous experiments with various settings. Results show that the computational complexity has a significant impact on the performance of these schemes. The RLS scheme is the least computationally intensive, thereby it is able to achieve the highest net data rate and the lowest energy consumption.

In the future, we plan to implement other codes such as LT codes, Raptor codes, and systematic Reed-Solomon codes on the same testbed in order to compare their performance to the coding schemes discussed in this paper.
The cost of random memory access and finite field operations is non-negligible on a terminal with constrained capacity.
A model should be devised that can account for these factors to give predictions on other platforms with different capabilities and constraints.

\section*{Acknowledgments}
This work was partially financed by the CONE project (Grant No. 09-066549/FTP) granted by the Danish Ministry of Science, Technology and Innovation as well as by the collaboration with Renesas Mobile throughout the NOCE project.

\appendix
\section*{Proof of Claim \ref{thm:rls_pm}}
\begin{align}
&\sum_{l=0}^{g-1} {g\choose l}(1-\epsilon)^l\epsilon^{g-l}p^{\text{RL}}_{m-g,g-l,\epsilon} \notag\\
&=\sum_{l=0}^{g-1} {g\choose l}(1-\epsilon)^l\epsilon^{g-l}\cdot\\[-1.5mm]
&\qquad\qquad\cdot\sum_{j=g-l}^{m-g}{{m-g}\choose j}(1-\epsilon)^j\epsilon^{m-g-j}
\prod_{s=0}^{g-l-1}(1-q^{s-j}) \notag\\
&=\sum_{l=0}^{g-1}\!\! {g\choose l}\sum_{j=g-l}^{m-g}{{m-g}\choose j}(1-\epsilon)^{j+l}\epsilon^{m-l-j}
\!\!\prod_{s=0}^{g-l-1}(1-q^{s-j})\notag\\
&=\sum_{l=0}^{g-1} {g\choose l}\sum_{j=g}^{m-g+l}{{m-g}\choose {j-l}}(1-\epsilon)^{j}\epsilon^{m-j}
\!\!\prod_{s=0}^{g-l-1}(1-q^{s-j+l})\notag\\
&=\sum_{j=g}^{m-1}\sum_{l=0}^{g-1} {g\choose l}{{m-g}\choose {j-l}}(1-\epsilon)^{j}\epsilon^{m-j}
\!\!\prod_{s=0}^{g-l-1}(1-q^{s-j+l})\notag\\
&\gtrsim\sum_{j=g}^{m-1} \sum_{l=0}^{g-1} {g\choose l}{{m-g}\choose {j-l}} (1-\epsilon)^{j}\epsilon^{m-j}
(1-\frac{1}{q-1}q^{g-j})\tag{*}\label{eq:gtrsim}
\end{align}
\eqref{eq:gtrsim} follows from (\ref{eq:bound_qprod}). Applying Vandermonde's identity $\sum_{l=0}^g {g\choose l}{{m-g}\choose {j-l}}={m\choose j}$ to \eqref{eq:gtrsim}, we have
\begin{align*}
 &(*)\\
 &=\sum_{j=g}^{m}\Big({m\choose j}-{{m-g}\choose{j-g}}\Big) (1-\epsilon)^{j}\epsilon^{m-j}
(1-\frac{1}{q-1}q^{g-j})\\
&=\sum_{j=g}^{m}{m\choose j}(1-\epsilon)^j\epsilon^{m-j}-\sum_{j=g}^{m}{{m-g}\choose{j-g}}(1-\epsilon)^j\epsilon^{m-j}\\
&-\frac{1}{q-1}\sum_{j=g}^{m}{m\choose j}(1-\epsilon)^{j}\epsilon^{m-j}q^{g-j}\\
&+\frac{1}{q-1}\sum_{j=g}^{m}{{m-g}\choose{j-g}}(1-\epsilon)^{j}\epsilon^{m-j}q^{g-j}\\
&=\sum_{j=g}^{m}{m\choose j}(1-\epsilon)^j\epsilon^{m-j}-(1-\epsilon)^g\sum_{j=0}^{m-g}{{m-g}\choose{j}}(1-\epsilon)^j\epsilon^{m-g-j}\\
&-\frac{1}{q-1}\sum_{j=g}^{m}{m\choose j}(1-\epsilon)^{j}\epsilon^{m-j}q^{g-j}\\
&+\frac{(1-\epsilon)^g}{q-1}\sum_{j=0}^{m-g}{{m-g}\choose{j}}(\frac{1-\epsilon}{q})^{j}\epsilon^{m-g-j}\\
&=\sum_{j=g}^{m}{m\choose j}(1-\epsilon)^j\epsilon^{m-j}-\frac{1}{q-1}\sum_{j=g}^{m}{m\choose j}(1-\epsilon)^{j}\epsilon^{m-j}q^{g-j}\\
&-(1-\epsilon)^g+\frac{(1-\epsilon)^g}{q-1}(\frac{1-\epsilon}{q}+\epsilon)^{m-g}.
\end{align*}

\bibliographystyle{IEEEtran}
\bibliography{roundrobin}

\end{document}